\documentclass[conference]{IEEEtran}
\IEEEoverridecommandlockouts

\usepackage{cite}
\usepackage{amsmath,amssymb,amsfonts}
\usepackage{graphicx}
\usepackage{textcomp}
\usepackage{xcolor}

\usepackage{amsthm}
\usepackage{float,epsfig}
\usepackage{rotating}
\usepackage{subfigure}
\usepackage{subcaption} 
\usepackage{caption}
\usepackage{tkz-graph}
\usepackage{color}
\usepackage{algorithm}
\usepackage{multicol}
\usepackage[utf8]{inputenc}
\usepackage{braket}
\usepackage{multirow}
\usepackage{qcircuit}
\usepackage[overload]{empheq}
\usepackage{hyperref}
\usepackage{color,lscape,enumerate}

\usepackage[noend]{algpseudocode}
\usepackage{mathtools}
\DeclarePairedDelimiter\ceil{\lceil}{\rceil}

\usepackage{hyperref} 
\usepackage{xparse}
\usepackage{lipsum}



\begin{document}

\newtheorem{theorem}{\bf Theorem}[section]
\newtheorem{proposition}[theorem]{\bf Proposition}
\newtheorem{definition}[theorem]{\bf Definition}
\newtheorem{corollary}[theorem]{\bf Corollary}
\newtheorem{example}[theorem]{\bf Example}
\newtheorem{exam}[theorem]{\bf Example}
\newtheorem{remark}[theorem]{\bf Remark}
\newtheorem{lemma}[theorem]{\bf Lemma}
\newcommand{\nrm}[1]{|\!|\!| {#1} |\!|\!|}

\newcommand{\calL}{{\mathcal L}}
\newcommand{\calX}{{\mathcal X}}
\newcommand{\calY}{{\mathcal Y}}
\newcommand{\calZ}{{\mathcal Z}}
\newcommand{\calW}{{\mathcal W}}
\newcommand{\calA}{{\mathcal A}}
\newcommand{\calB}{{\mathcal B}}
\newcommand{\calC}{{\mathcal C}}
\newcommand{\calK}{{\mathcal K}}
\newcommand{\C}{{\mathbb C}}
\newcommand{\Z}{{\mathbb Z}}
\newcommand{\R}{{\mathbb R}}
\renewcommand{\SS}{{\mathbb S}}
\newcommand{\LL}{{\mathbb L}}
\newcommand{\st}{{\star}}
\def\kernel{\mathop{\rm kernel}\nolimits}
\def\sigan{\mathop{\rm span}\nolimits}

\newcommand{\klasse}{{\boldsymbol \Delta}}

\newcommand{\ba}{\begin{array}}
\newcommand{\ea}{\end{array}}
\newcommand{\von}{\vskip 1ex}
\newcommand{\vone}{\vskip 2ex}
\newcommand{\vtwo}{\vskip 4ex}
\newcommand{\dm}[1]{ {\displaystyle{#1} } }

\newcommand{\be}{\begin{equation}}
\newcommand{\ee}{\end{equation}}
\newcommand{\beano}{\begin{eqnarray*}}
\newcommand{\eeano}{\end{eqnarray*}}
\newcommand{\inp}[2]{\langle {#1} ,\,{#2} \rangle}
\def\bmatrix#1{\left[ \begin{matrix} #1 \end{matrix} \right]}
\def\cmatrix#1{\left( \begin{matrix} #1 \end{matrix} \right)}
\def \noin{\noindent}
\newcommand{\evenindex}{\Pi_e}



\def \R{{\mathbb R}}
\def \C{{\mathbb C}}
\def \F{{\mathbb F}}
\def \K{{\mathbb K}}
\def \H{{\mathbb H}}
\def \cu{\mathrm{CU}}

\def \T{{\mathbb T}}
\def \Pb{\mathrm{P}}
\def \N{{\mathbb N}}
\def \Ib{\mathrm{I}}
\def \Ls{{\Lambda}_{m-1}}
\def \Gb{\mathrm{G}}
\def \Hb{\mathrm{H}}
\def \Lam{{\Lambda}}

\def \Qb{\mathrm{Q}}
\def \Rb{\mathrm{R}}
\def \Mb{\mathrm{M}}
\def \norm{\nrm{\cdot}\equiv \nrm{\cdot}}

\def \A{{{\mathbb P}_1(\C^{n\times n})}}
\def \H{{\mathbb H}}
\def \L{{\mathcal L}}
\def \G{{\F_{\tt{H}}}}
\def \S{\mathbb{S}}
\def \s{\mathbb{s}}
\def \sigmin{\sigma_{\min}}
\def \elam{\Lambda_{\epsilon}}
\def \slam{\Lambda^{\S}_{\epsilon}}
\def \Ib{\mathrm{I}}
\def \Tb{\mathrm{T}}
\def \d{{\delta}}

\def \Lb{\mathrm{L}}
\def \N{{\mathbb N}}
\def \Ls{{\Lambda}_{m-1}}
\def \Gb{\mathrm{G}}
\def \Hb{\mathrm{H}}
\def \Delta{\triangle}
\def \Rar{\Rightarrow}
\def \p{{\mathsf{p}(\lam; v)}}

\def \D{{\mathbb D}}

\def \tr{\mathrm{Tr}}
\def \cond{\mathrm{cond}}
\def \lam{\lambda}
\def \sig{\sigma}
\def \sign{\mathrm{sign}}

\def \ep{\epsilon}
\def \diag{\mathrm{diag}}
\def \rev{\mathrm{rev}}
\def \vec{\mathrm{vec}}

\def \ham{\mathsf{Ham}}
\def \herm{\mathsf{Herm}}
\def \sym{\mathsf{sym}}
\def \odd{\mathsf{sym}}
\def \en{\mathrm{even}}
\def \rank{\mathrm{rank}}
\def \pf{{\bf Proof: }}
\def \dist{\mathrm{dist}}
\def \rar{\rightarrow}

\def \rank{\mathrm{rank}}
\def \pf{{\bf Proof: }}
\def \dist{\mathrm{dist}}
\def \Re{\mathsf{Re}}
\def \Im{\mathsf{Im}}
\def \re{\mathsf{re}}
\def \im{\mathsf{im}}

\def \sym{\mathsf{CSym}}
\def \sksym{\mathsf{skew\mbox{-}sym}}
\def \odd{\mathrm{odd}}
\def \even{\mathrm{even}}
\def \herm{\mathsf{Herm}}
\def \skherm{\mathsf{skew\mbox{-}Herm}}
\def \str{\mathrm{ Struct}}
\def \cnot{\mathrm{CNOT}}
\def \eproof{$\blacksquare$}

\def \bS{{\bf S}}
\def \cA{{\cal A}}
\def \E{{\mathcal E}}
\def \X{{\mathcal X}}
\def \cH{\mathcal{H}}
\def \cJ{\mathcal{J}}
\def \tr{\mathrm{Tr}}
\def \range{\mathrm{Range}}
\def \adj{\star}

\def \pal{\mathrm{palindromic}}
\def \palpen{\mathrm{palindromic~~ pencil}}
\def \palpoly{\mathrm{palindromic~~ polynomial}}
\def \odd{\mathrm{odd}}
\def \even{\mathrm{even}}
\def \QT{{\texttt{QT}}}

\newcommand{\tm}[1]{\textcolor{magenta}{ #1}}
\newcommand{\tre}[1]{\textcolor{red}{ #1}}
\newcommand{\tb}[1]{\textcolor{blue}{ #1}}
\newcommand{\tg}[1]{\textcolor{green}{ #1}}

\def\BibTeX{{\rm B\kern-.05em{\sc i\kern-.025em b}\kern-.08em
    T\kern-.1667em\lower.7ex\hbox{E}\kern-.125emX}}

\title{Quantum Learning of Classical Correlations with \\ continuous-domain Pauli Correlation Encoding}

\author{\IEEEauthorblockN{ Vicente P. Soloviev}
\IEEEauthorblockA{\textit{Fujitsu Research of Europe Ltd.} \\
Madrid, Spain \\
vicente.perezsoloviev@fujitsu.com}
\and
\IEEEauthorblockN{ Bibhas Adhikari}
\IEEEauthorblockA{\textit{Fujitsu Research of America Inc.} \\
Santa Clara, CA, USA \\
badhikari@fujitsu.com}
}

\maketitle

\begin{abstract}
We propose a quantum machine learning framework for estimating classical covariance matrices using parameterized quantum circuits within the Pauli-Correlation-Encoding (PCE) paradigm. We introduce two quantum covariance estimators: the \emph{C-Estimator}, which constructs the covariance matrix through a Cholesky factorization to enforce positive (semi)definiteness, and a computationally efficient \emph{E-Estimator}, which directly estimates covariance entries from observable expectation values. We analyze the trade-offs between the two estimators in terms of qubit requirements and learning complexity, and derive sufficient conditions on regularization parameters to ensure positive (semi)definiteness of the estimators. Furthermore, we show that the barren plateau phenomenon in training the variational quantum circuit for E-estimator can be mitigated by appropriately choosing the regularization parameters in the loss function for HEA ansatz. The proposed framework is evaluated through numerical simulations using randomly generated covariance matrices. We examine the convergence behavior of the estimators, their sensitivity to low-rank assumptions, and their performance in covariance completion with partially observed matrices. The results indicate that the proposed estimators provide a robust approach for learning covariance matrices and offer a promising direction for applying quantum machine learning techniques to high-dimensional statistical estimation problems.
\end{abstract}

\begin{IEEEkeywords}
Covariance, Parametrized quantum circuit, Pauli-strings, matrix completion
\end{IEEEkeywords}

\section{Introduction}
Estimation of covariance and correlation matrices is a fundamental task in multivariate statistics and machine learning with wide applications in finance, signal processing, and network analysis \cite{jolliffe2002principal,pourahmadi2013high,tsukuma2020shrinkage,masuda2025introduction,garcia2025high,adhikari2025signed}. The classical estimator is the sample covariance matrix, which is consistent when the number of observations is large compared to the number of variables. However, in many modern applications such as financial markets, the number of assets can be comparable to or larger than the number of observations, leading to unstable and poorly conditioned estimates. To address this issue, several regularized estimators have been proposed, including shrinkage estimators that combine the sample covariance matrix with a structured target \cite{ledoit2004well}, factor-based covariance models \cite{fan2008high}, and sparse covariance estimators based on thresholding techniques \cite{bickel2008covariance}. These approaches aim to improve estimation accuracy and numerical stability, particularly in high-dimensional settings where the sample covariance matrix becomes singular or highly noisy.

Despite their usefulness, several challenges remain in estimating covariance and correlation matrices in large-scale problems. One common approach is to assume a low-rank structure motivated by latent factor models, where the covariance matrix is approximated by a low-rank component plus a diagonal matrix capturing idiosyncratic noise \cite{fan2013large}. While low-rank models significantly reduce the number of parameters, they may fail to capture complex dependence structures present in real data. However, low-rank covariance models are widely used in applications such as financial econometrics, where asset returns are often modeled through a small set of market factors, enabling more robust estimation of risk and portfolio allocation \cite{fan2008high}. Another major challenge arises in estimating the inverse covariance matrix (precision matrix), which plays a key role in graphical models and portfolio analysis \cite{stevens1998inverse}. Sparse precision matrix estimation methods such as the graphical lasso have been proposed to address this issue \cite{friedman2008sparse,yuan2010high}. Furthermore, in many practical situations covariance matrices are only partially observed due to missing data or limited observations, leading to the problem of covariance matrix completion \cite{lounici2014high}. In financial time-series data, for example, missing values may occur because different assets are traded at different times or due to gaps in historical records. When the underlying covariance matrix cannot be estimated directly from observations, specialized techniques are used to recover or approximate the missing entries. Techniques based on low-rank matrix completion and convex optimization have been developed to recover missing entries under structural assumptions \cite{candes2009exact}. These methods aim to reconstruct a positive semidefinite covariance estimator while preserving the statistical dependence structure among the variables. The afore mentioned challenges highlight the need for robust estimation methods capable of handling high-dimensional, noisy, and incomplete data.

In this paper, we introduce a quantum machine learning model to estimate the pairwise covariance or correlation strengths of a collection of random variables. For a set of $n$ random variables $\{X_1,X_2,\ldots,X_n\}$, the covariance matrix $\Sigma=[\Sigma_{ij}]$ is symmetric and requires the estimation of ${n\choose 2}$ parameters $\Sigma_{ij}$ for $i>j$. In our proposal, we employ the recently introduced Pauli-Correlation-Encoding (PCE) framework for quantum learning of a classical covariance estimator. The PCE technique was originally proposed to address binary optimization problems such as the Max-Cut problem on combinatorial graphs \cite{sciorilli2025towards} (see Section~\ref{Sec:PCE}). In this work, we extend the PCE framework to a continuous setting, where the variables of the optimization problem are real-valued parameters encoded through Pauli-string observables. Consequently, the variables correspond to the expectation values of these observables that optimize a loss function defined as the Frobenius distance between the proposed quantum circuit-based estimator and a given classical estimator.

More specifically, we consider a parameterized quantum circuit $\mathsf{U}(\boldsymbol{\theta})$ with parameters $\boldsymbol{\theta}$ acting on a $\eta$-qubit quantum register. The parameters $\Sigma_{ij}$ are encoded through the expectation values of ${n\choose 2}$ observables evaluated on the output quantum state 
$\mathsf{U}(\boldsymbol{\theta})\ket{0}^{\otimes \eta}=\ket{\psi(\boldsymbol{\theta})}.$ The learning of the parameters is achieved by minimizing a loss function $\mathcal{L}(\boldsymbol{\theta})$ defined through the Frobenius distance between the classical covariance estimator and the quantum estimator derived from the expectation values of the observables. The overall procedure of the proposed method is illustrated in Figure~\ref{fig:scheme}.

\begin{figure}[htbp]
    \centering
    \includegraphics[width=1.0\linewidth]{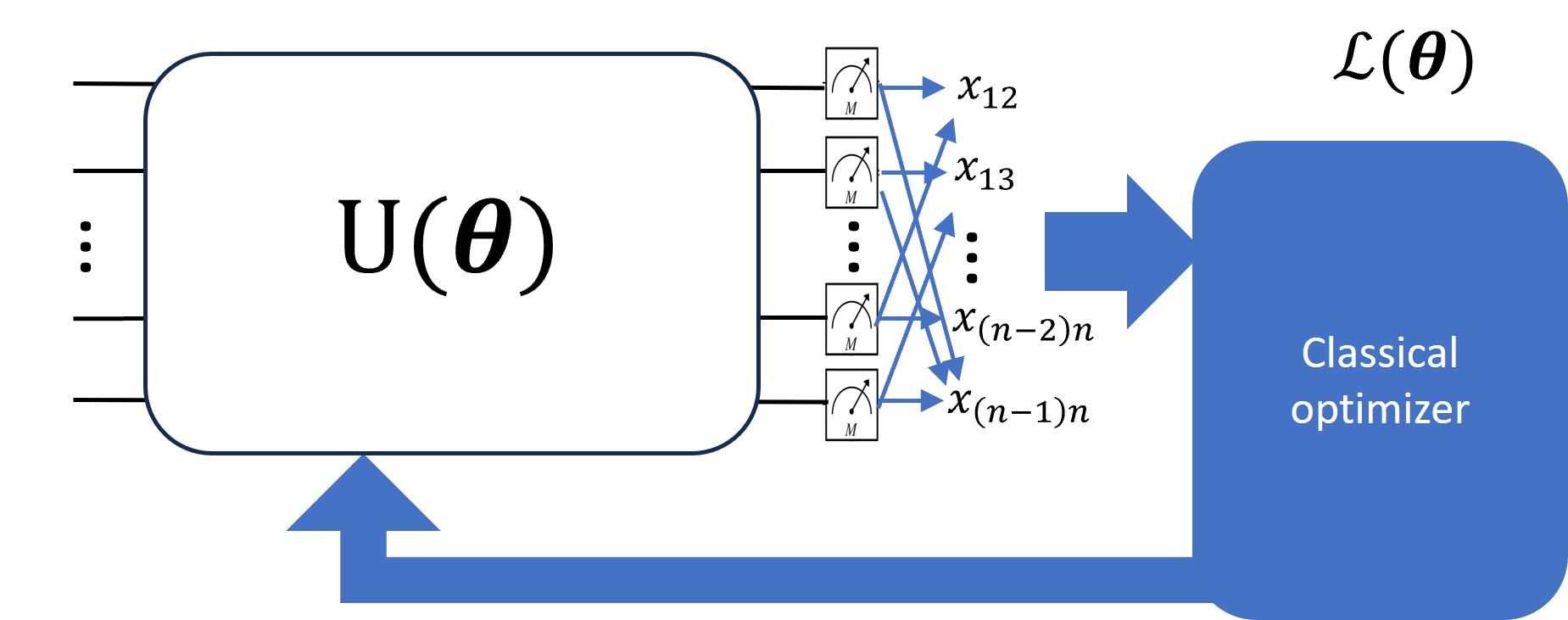}
    \caption{Proposed quantum-classical workflow}
    \label{fig:scheme}
\end{figure}

We address three fundamental problems in covariance estimation: (a) low-rank approximation of a covariance matrix, (b) completion of a covariance matrix, and (c) estimation of the inverse (precision) matrix. To this end, we propose two covariance estimators based on a parameterized quantum circuit (PQC): (i) the \emph{C-Estimator}, and (ii) the \emph{E-Estimator}.

The C-Estimator is defined through the Cholesky factorization of the proposed estimator of the form $\widehat{\Sigma}=LL^{\top}$. The entries of the upper triangular matrix $L^{\top}$ are defined as $l_{ij}=c_{ij}x_{ij}$ for $i\geq j$, where $x_{ij}=\langle \psi(\boldsymbol{\theta})|\Pi_{ij}|\psi(\boldsymbol{\theta})\rangle$. The diagonal entries $l_{ii}$ are chosen as functions of $\mathrm{Var}(X_i)$ together with $x_{il}$ and $c_{li}$ for $l<i$, in such a way that the $i$-th diagonal entry of $LL^{\top}$ equals $\mathrm{Var}(X_i)$, where $X_i$ is the $i$-th random variable of the random vector $X=(X_1,\ldots,X_n)$. Here $c_{ij}$ denote regularization parameters. The number of qubits required in this model is $\eta=\left\lceil\sqrt{{n\choose 2}}\right\rceil$, and the observables $\Pi_{ij}$ are chosen from the set $\{\Pi^{(2)}_1,\ldots,\Pi^{(2)}_{{\eta\choose 2}}\}$, where each $\Pi_l^{(2)}$ is a permutation of Pauli strings of the form $X^{\otimes 2}\otimes I^{\otimes(\eta-2)}$, $Y^{\otimes 2}\otimes I^{\otimes(\eta-2)}$, or $Z^{\otimes 2}\otimes I^{\otimes(\eta-2)}$, with $X,Y,Z,$ and $I$ denoting the Pauli matrices.

Next, we introduce the E-Estimator, which is defined on an $\eta=n$-qubit register with observables $\Pi_{ij}$ chosen as permutations of the Pauli string $X^{\otimes 2}\otimes I^{\otimes(n-2)}$. In this case the proposed estimator is the symmetric matrix $\widehat{\Sigma}=[\widehat{\Sigma}_{ij}]$ defined by $\widehat{\Sigma}_{ij}=c_{ij}x_{ij}$ for $i\neq j$, where $x_{ij}=\langle \psi(\boldsymbol{\theta})|\Pi_{ij}|\psi(\boldsymbol{\theta})\rangle$, and $\widehat{\Sigma}_{ii}=\mathrm{Var}(X_i)$. The parameters $c_{ij}$ again serve as regularization coefficients.

In general, the E-estimator is not necessarily positive (semi)definite for arbitrary choices of the regularization parameters. Therefore, we derive sufficient conditions on the parameters $c_{ij}$ under which the covariance estimator $\widehat{\Sigma}$ becomes positive definite or positive semidefinite. Moreover, there exists a trade-off between the two learning methods. The C-Estimator requires significantly fewer qubits, whereas the learning procedure for the E-Estimator is computationally simpler and efficient, since its entries correspond directly to scaled expectation values of the observables and the observables form one set of mutually commuting operators. In addition, estimation of the precision matrix from the C-Estimator can be performed efficiently using the Cholesky factor $L$, thereby bypassing explicit matrix inversion. Furthermore, low-rank approximations can be obtained more efficiently within the C-Estimator framework by setting $n-r$ diagonal entries of $L$ to zero to obtain a rank-$r$ estimator, rather than computing a singular value decomposition or eigenvalue decomposition of the E-Estimator. However, the covariance completion problem can be efficiently addressed using both the estimators.

Finally, we analyze the E-Estimator algorithm to investigate the variance of the gradient of the loss function with respect to the quantum circuit parameters. In particular, we derive an explicit upper bound for the variance 
$\mathrm{Var}_{\boldsymbol{\theta}}\!\left(\partial \mathcal{L}/\partial \theta_{ij}\right)$ for HEA circuits with $poly(n)$ layers that forms an approximate $2$-design.
We show that this bound depends on the norm of the regularization vector formed by the parameters $c_{ij}$ as well as the Frobenius norm of the given classical covariance matrix. Our analysis indicates that the barren plateau phenomenon can be mitigated when some of the regularization parameters scale exponentially with $n$.

To evaluate the performance of the proposed quantum learning framework for classical covariance estimators and its application to covariance completion, we conduct the following experiments. We generate classical covariance estimators as random positive (semi)definite matrices of the form $AA^\top$, where the entries of $A$ follow a standard normal distribution. The convergence of the proposed learning framework is then examined using both the C-Estimator and the E-Estimator. The underlying PCE circuit is implemented using a hardware-efficient ansatz (HEA) for multi-qubit systems with varying numbers of qubits. 

Furthermore, we investigate low-rank covariance estimation using the C-Estimator for different target ranks $r\geq 1$ for randomly generated covariance matrices of order $n$, thereby analyzing the sensitivity of the method under low-rank assumptions. Finally, we evaluate the effectiveness of the proposed estimators for covariance completion by masking $O(n)$ entries of randomly generated covariance matrices $AA^\top\in\mathbb{R}^{n\times n}$ and reconstructing the missing entries using the C-Estimator and E-Estimator. Our numerical experiments demonstrate that both estimators based on the cPCE framework are robust for learning positive (semi)definite covariance matrices.

The remainder of the paper is organized as follows. Section~\ref{Sec:prelim} briefly reviews the notions of PCE, HEA, and classical covariance estimators. Section~\ref{Sec:cPCEQL} introduces the cPCE framework and presents the C-Estimator and E-Estimator for quantum learning of classical covariance estimators, along with their applications to covariance completion and precision matrix estimation. Finally, Section~\ref{sec_results} reports the results of numerical simulations evaluating the performance of the proposed estimators.

\section{Preliminaries}\label{Sec:prelim}

\subsection{Hardware Efficient Ansatz (HEA)}

A parametrized quantum circuit (PQC) represents a sequence of parametrized quantum gates for implementation in a quantum register.  In general, a PQC is represented as 
\begin{equation}\label{eqn:uCircuit}
    \mathsf{U}(\boldsymbol{\theta}) =\prod_{l=1}^L \mathsf{U}(\boldsymbol{\theta}_l), \,\,  \mathsf{U}(\boldsymbol{\theta}_l)= \prod_{k=0}^K \exp(-iH_k\theta_{lk}),
\end{equation} where the index $l$ indicates a layer, $\boldsymbol{\theta}_l=\{\theta_{1k},\hdots,\theta_{lK}\}$ is the set of parameters in $l$-th layer. The set of Hermitian traceless matrices $H_k$, called the generators of the PQC. Now we recall from  that a cost function $\mathcal{L}_{\boldsymbol{\theta}}(\rho,O)$ is said to have a barren plateau when training the parameters in $\boldsymbol{\theta}$ if 
$$\mbox{Var}_{\boldsymbol{\theta}}\left[\partial \mathcal{L}(\boldsymbol{\theta})/\partial\theta_\mu\right]\leq F(n)$$ with
$F(n)\in \mathcal{O}(1/b^n)$ for some $b>1$ and the variance is defined with respect to the set of parameters $\boldsymbol{\theta}.$


The construction of a layer of the Hardware Efficient Ansatz (HEA) in a quantum register begins by applying single-qubit gates to all qubits  and subsequently introducing a sequence of entangling two-qubit gates \cite{kandala2017hardware}. The two-qubit gates need not be continuously parameterized, although they may be as considered in \cite{campos2021abrupt}. Repeating this layer several rounds yields a parametrized quantum circuit whose entangling capacity increases with the number of layers. In most scenarios, $CNOT$ or control-$Z$ gates are employed, arranged in a circular topology acting on qubit pairs 
$(i,i+1)$ modulo $n,$ the total number of qubits in the register \cite{mcclean2018barren}, which we consider in this paper as a test case as described in Figure \ref{fig_ansatz} for a $4$-qubit register. However, the entangling gates may also be applied in all-to-all qubits \cite{skolik2021layerwise}.

\begin{figure*}[t]
\centering
\begin{subfigure}
\centering
\includegraphics[width=0.25\textwidth]{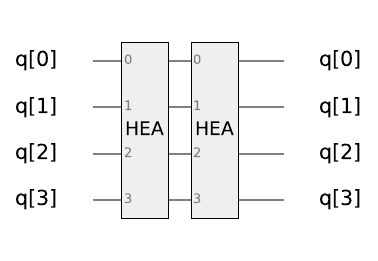}
\label{fig_ansatz_general}
\end{subfigure}
\begin{subfigure}
\centering
\includegraphics[width=0.72\textwidth]{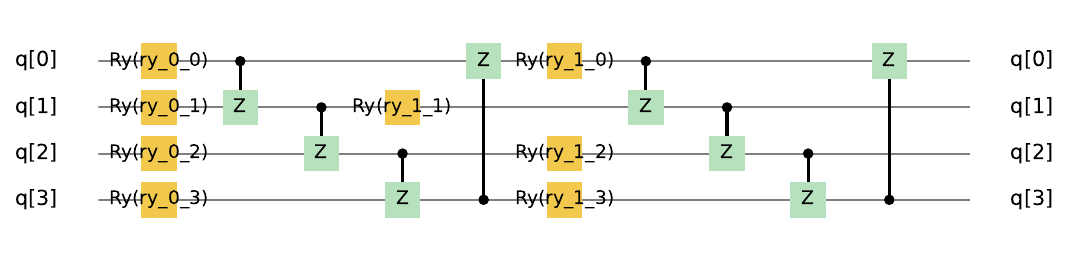}
\label{fig_ansatz_decomposed}
\end{subfigure}
\caption{Block-wise and gate-wise HEA built with a linear entangling structure and CZ gates, used in PCE strategy approach where $m = 10$, $n=4$, and $k=2$.}
\label{fig_ansatz}
\end{figure*}

\subsection{Pauli-Correlation-Encoding (PCE)}
\label{Sec:PCE}


First we briefly review the Pauli-correlation encoding (PCE) framework introduced in \cite{sciorilli2025towards} for combinatorial optimization problems with $m=O(n^k)$ binary variables using only $n$ qubits, for an  integer $k.$  Given the binary variables $x_j$ and traceless $n$-qubit Pauli strings $\Pi_j$ $1\leq j\leq m$, the PCE is defined as 
\begin{equation}\label{eqn:sgn}
    x_j:= \mbox{sgn}(\langle\Pi_j\rangle),
\end{equation} where $\langle\Pi_j\rangle :=\braket{\Psi|\Pi_j|\Psi}$ is the expectation value of $\Pi_j$ over a quantum state $\ket{\Psi}.$ In particular, to empower the expectation values computation, the authors suggest the encoding by considering Pauli-strings given by $\Pi^{(k)}=\left\{ \Pi_1^{(k)}, \Pi_2^{(k)}, \hdots, \Pi_m^{(k)}\right\}$ where each $\Pi_l^{(k)}$ is a permutation of either $X^{\otimes k}\otimes I^{\otimes n-k}$, $Y^{\otimes k}\otimes I^{\otimes n-k}$, or $Z^{\otimes k}\otimes I^{\otimes n-k}.$ The motivation behind this framework is to encode the binary variables through the sgn function as described in (\ref{eqn:sgn}). The proposed framework is applied to the weighted Max-Cut problem for a graph, which maximizes the objective function $\mathcal{V}(\boldsymbol{x})=\sum_{(i,j)\in E} W_{ij}(1-x_ix_j),$ where $\boldsymbol{x}$ denotes the set of binary variables $x_i,$ and $W_{ij}$ is the weigh of an edge $(i,j),$ $E$ denotes the edge set of the associated graph.

In the PCE setting, the state $\ket{\Psi}$ is defined as an output state of a parametrized circuit $U(\boldsymbol{\theta})$ with the set of parameters denoted as $\boldsymbol{\theta}$ and hence modeled as $\ket{\Psi(\boldsymbol{\theta})}=U(\boldsymbol{\theta})\ket{0}^{\otimes n}.$ In their variational approach, they consider the parametrized quantum circuit having a brickwise architecture with the loss function \begin{equation}
    \mathcal{L}=\sum_{(i,j)\in E} W_{ij} \tanh{(\alpha \langle\Pi_j\rangle) \tanh{(\alpha\langle\Pi_j\rangle)}} + \mathcal{L}^{(\mbox{reg})},
\end{equation} which is minimized to learn the parameters in $\boldsymbol{\theta}$. Here, the regularization term $$\mathcal{L}^{(\mbox{reg})}=\beta\nu \left[\frac{1}{m}\sum_{i\in V} \tanh (\alpha\langle\Pi_i\rangle)^2\right]^2$$
forces all correlations to have small values that improves the solver's performance, $\alpha >1$ is the rescaling factor which is shown to give a good performance for $\alpha\approx n^{[k/2]}$, $\nu=w(G)/2+w(T_{\min})/4$ with $w(G)$ and $w(T_{\min})$ the weights of the graph $G$ and its minimum spanning tree, respectively for weighted MaxCut and $\nu=|E|/2 + (m-1)/4$ for MaxCut. Once the output state is measured and a bit-string $\boldsymbol{x}$ is obtained using equation (\ref{eqn:sgn}), the final value of $\mathcal{V}(x^*)$ a classical post-processing step involving a single-bit swap search is performed for finding potential better solutions $x^*$, whose complexity is $O(|E|).$


Further, in their exploration of barren plateau characterization for the classical optimization, they show that random parameter initializations make the circuits sufficiently random, and the variance of $\mathcal{L}$ is given by $$\mbox{var}(\mathcal{L})=\frac{\alpha^4}{4^n}\sum_{(i,j)\in E} W_{ij}^2 +O\left(\frac{\alpha^6}{8^n}\right),$$ which is also empirically observed at circuit depths of about $8.5\times n.$ 

The PCE framework has been further applied to a variety of combinatorial optimization problems, including the Low Autocorrelation Binary Sequences (LABS) problem \cite{sciorilli2025competitive}, portfolio optimization \cite{soloviev2025large}, budget-constrained optimization \cite{martinez2026pauli}, and the Traveling Salesman Problem (TSP) \cite{do2026warm}.

\subsection{Classical covariance estimators}

The sample covariance estimator defined for a collection of random variables $X=(X_1,\ldots,X_n)$ with observations $\{x^{(t)}\}_{t=1}^{T}$ is given by
$$\widehat{\Sigma}=\frac{1}{T-1}\sum_{t=1}^{T}(x^{(t)}-\bar{x})(x^{(t)}-\bar{x})^{\top},$$
where $\bar{x}$ denotes the sample mean. The corresponding sample correlation estimator is obtained by normalizing the entries of $\widehat{\Sigma}$ by the estimated standard deviations. In high-dimensional problems the sample covariance matrix becomes noisy, ill-conditioned, or singular, motivating the development of improved estimators \cite{anderson2003introduction,fan2008high}.


Several regularization techniques have been proposed to address these issues. Shrinkage covariance estimators combine the sample covariance matrix with a structured target, such as the identity matrix or a factor-based model, to improve conditioning and estimation accuracy \cite{ledoit2004well,ledoit2020analytical}. Regularized covariance estimators based on thresholding or penalization have also been widely studied for high-dimensional settings \cite{bickel2008covariance}. In many applications, particularly in graphical models, the inverse covariance matrix plays a fundamental role in capturing conditional dependence relationships between variables. Sparse precision matrix estimation methods such as the graphical lasso introduce an $\ell_1$ penalty to obtain sparse estimates of the precision matrix, which correspond to edges in an undirected graphical model \cite{friedman2008sparse}, see also \cite{ravikumar2011high,hsieh2011sparse}. Cholesky-based methods are also studied for covariance matrix and precision matrix estimation, see \cite{rothman2010new,khare2019scalable,kang2020cholesky,kang2020improved} and the references therein. Factor model based covariance estimators further exploit the assumption that the covariance structure is driven by a small number of latent factors, which substantially reduces the number of parameters to be estimated \cite{fan2013large}.

Another important line of research concerns robust covariance estimation, which aims to provide reliable estimates in the presence of outliers or heavy-tailed data distributions. Robust estimators include $M$-estimators of covariance matrices \cite{maronna1976robust}, the Minimum Covariance Determinant (MCD) estimator \cite{rousseeuw1985multivariate}, and other robust scatter estimators that maintain good statistical efficiency while limiting the influence of anomalous observations. These approaches are particularly relevant in financial and high-dimensional data analysis where classical estimators can be highly sensitive to deviations from Gaussian assumptions. For a review on this topic, see \cite{bai2011estimating,tong2014estimation,lam2020high}.

\section{Continuous-domain PCE (cPCE) and quantum learning of covariances}\label{Sec:cPCEQL}

In this section, we first introduce cPCE and then propose C-Estimator and E-Estimator for quantum learning of a classical covariance estimator.

\subsection{Continuous-domain PCE (cPCE)}

In this section we propose PCE for estimating variables that can take any values from the real line and we refer to it as continuous variables. Hence we extend the PCE technique to continuous-domain PCE (cPCE).

According to this we encode our continuous-domain variables as
\begin{equation}
    x_i:= c_i\langle\Pi_i\rangle,
\end{equation}
where $c_i$ is a normalization coefficient which we set to $c_i=1$ if search space in the optimization problem is known to be in the range $[-1, 1] \; \; \forall x_i$. $\Pi_i$ is defined according to the baseline defined in \ref{Sec:PCE}. 

Following this notation, the new loss function will consider continuous-domain values and can be adapted to any kind of problem. Next sections will introduce different loss functions in order to combine this approach with other use cases.


\subsection{cPCE-based covariance estimator formulation }

Now we describe our proposal for generation of a correlation estimator based on a quantum-classical machine learning model. The broad overview of the model is given by Figure \ref{fig:scheme}.

First note that there are ${n\choose 2}=n(n-1)/2$ variables to estimate in a covariance matrix corresponding to a random vector $X=(X_1,X_2,\hdots, X_n)$, which correspond to the strictly upper triangular part of the estimator $\widehat{\Sigma}.$ The matrix $\widehat{\Sigma}=[\widehat{\Sigma}_{ij}]$ being symmetric, $\widehat{\Sigma}_{ij}=\widehat{\Sigma}_{ji}$ for $i\neq j$ and $\widehat{\Sigma}_{ii}=V_i=Var(X_i),$ the variance of the $i$-th random variable $X_i.$ To employ cPCE for estimating the ${n\choose 2}$ parameters, we should consider a quantum register on $\eta_k=\ceil{{n\choose 2}^{1/k}}$ qubits, so that $(\eta_k)^k\approx {n\choose 2}.$ For instance, if $n=5$ then $\eta_2=\ceil{\sqrt{10}}=4$ and $\eta_3=\ceil{(10)^{1/3}}=3.$ The set of observables corresponding to $\eta_k$ is given by $\Pi^{(k)}=\left\{\Pi_1^{(k)}, \hdots, \Pi_{{\eta_k\choose 2}}^{(k)}\right\},$ where each  $\Pi^{(k)}_l$ is a permutation of a Pauli-string of the form $X^{\otimes k}\otimes I^{\otimes(\eta_k-k)}$, $Y^{\otimes k}\otimes I^{\otimes(\eta_k-k)}$, $Z^{\otimes k}\otimes I^{\otimes(\eta_k-k)}$ with $X,Y, Z$ are Pauli matrices of order $2.$ Now note that $\left|\Pi^{(k)}\right|=3{\eta_k\choose 2}$ for a value of $k.$ Indeed, note that  $$3{\eta_k \choose 2} = \dfrac{3\times \eta_k\times (\eta_k-1)}{2}=\dfrac{3}{2} {n\choose 2}^{1/k}\times \left({n\choose 2}^{1/k}-1\right).$$ Then for $k=2,$ $3{\eta_k\choose 2}=\dfrac{3}{2}{n\choose 2}-\sqrt{{n\choose 2}}=\frac{3}{2}{n\choose 2}-\mathcal{O}(n)$, and for $k=3$ we have $3{\eta_k\choose 2}=\frac{3}{2}\left({{n\choose 2}}^{2/3}-{{n\choose 2}}^{1/3}\right)\sim \mathcal{O}(n^{4/3})$ and ${n\choose 2}\sim \mathcal{O}(n^2).$ Consequently, for our purpose of using cPCE for covariance estimator, we set $k=2$ and we consider the case when $n$ is large such as $n\geq 10$ to guarantee that the number of ovservables for the number of qubits under consideration should be at least the desired number of parameters ${n\choose 2}.$   The comparison of the functions $f_k(\eta_k)=3{\eta_k\choose 2},$ $k=2,3$ and $g(n)={n\choose 2}$ is given in Figure \ref{fig:qubitcount}.  

\begin{figure}
    \centering
    \includegraphics[width=0.42\textwidth]{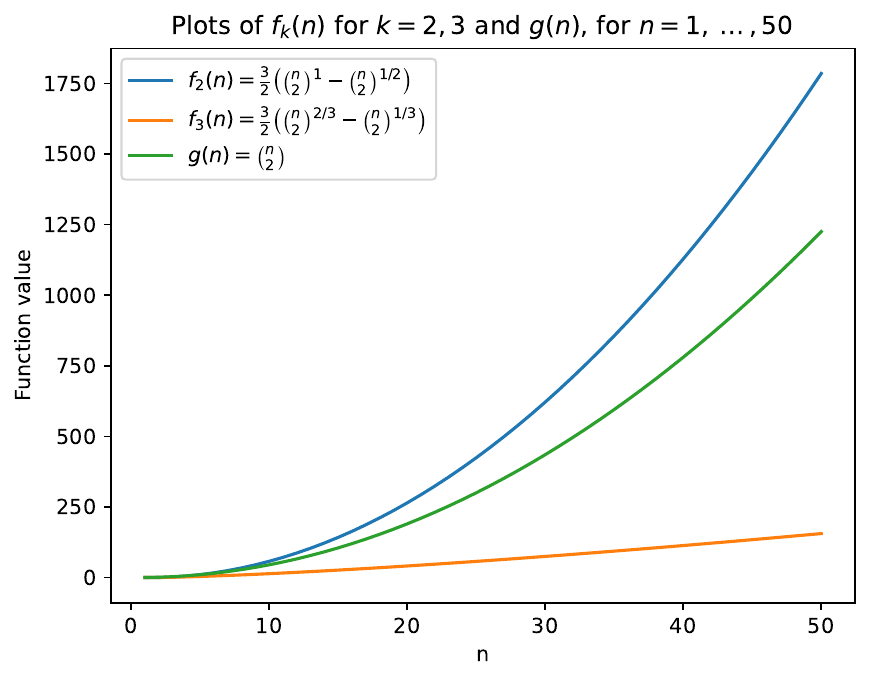}
    \caption{Growth of number of observables.}
    \label{fig:qubitcount}
\end{figure}


\subsubsection{C-Estimator}

In this section, we propose a covariance estimator through an estimator of its Cholesky factor, and we call it \textit{C-Estimator}. First, we discuss full C-Estimator, which means that a classical covariance matrix is known for all pairs of random variables $(X_i,X_j),$ $1\leq i\neq j\leq n.$ Next, we propose a C-Estimator when all covariance entries are not known. In what follows, we discuss C-Estimators using cPCE. 

First note that the diagonals of a covariance estimator are the variances corresponding to marginals, and the estimator must be a positive (semi)definite matrix. Thus we propose the estimator as $\widehat{\Sigma}=LL^\top,$ where $L$ is a lower-triangular matrix of order $n$ given by \begin{equation}\label{eqn:L}
L=\bmatrix{\lam_1& 0 & 0 & \hdots & 0 \\ c_{12}x_{12}& \lam_2& 0 & \ddots & 0\\
\vdots & &\ddots & &0\\
c_{1n}x_{1n}& c_{2n}x_{2n}& c_{3n}x_{3n}& \hdots & \lam_n},\end{equation} $\lam_i=\sqrt{\mbox{Var}(X_i)-\sum_{l=1}^{i-1} c_{li}^2x_{li}^2}$ for $i\geq 2,$ $\lam_1=\sqrt{\mbox{Var}(X_1)},$ $c_{ij}$s are regularization parameters, $x_{ij}=\bra{\psi(\boldsymbol{\theta})}\Pi_{ij}\ket{\psi(\boldsymbol{\theta})},$ $\Pi_{ij}\in \Pi^{(2)}.$ The parameters $c_{ij}$ should be chosen such that $\lam_i\geq 0.$ Then we have
$\widehat{\Sigma}=LL^\top=[\widehat{\Sigma}_{ij}],$ where $\widehat{\Sigma}_{ij}=\braket{e_iL|L^\top e_j}=\bra{e_i}L\cdot \bra{e_j}L,$ where $\ket{e_i}$ is the $i$-th column of the identity matrix of order $n$ and $u\cdot v$ denotes the scalar product of vectors. In particular, when $i=j,$ it can be easily verified that $\widehat{\Sigma}_{ii}=\mbox{Var}(X_i).$
Obviously, $\widehat{\Sigma}$ is positive semi-definite by construction, and it is positive definite when $\lam_i> 0$ for all $i.$ The regularization parameters $c_{ij}$ should be chosen such that \begin{equation}\label{eqn:cond1cij}\sum_{l=1}^{i-1} c_{li}^2x_{li}^2\leq \mbox{Var} (X_i).\end{equation} 


The optimal values of quantum circuit parameters $\boldsymbol{\theta}$ to obtain $\ket{\psi(\boldsymbol{\theta})}=U(\boldsymbol{\theta})\ket{{\bf 0}}$ by minimizing the loss function defined as follows. \begin{equation}\label{eqn:defloss1}
    \ell(\boldsymbol{\theta})= \sum_{\substack{i,j\\ j\leq i}}\left(\widehat{\Sigma}_{ij}-\widehat{\Sigma}^{ce}_{ij}\right)^2,
\end{equation} where $\widehat{\Sigma}^{ce}$ is the given classical estimator of the covariance matrix. 

\begin{algorithm}
  \caption{Quantum algorithm for a C-Estimator }
   {\bf Input:} $\widehat{\Sigma}^{ce},  \eta_2=\ceil{\sqrt{{n\choose 2}}},$ regularization parameters $c_{ij}$ \\
   {\bf Output:} $\widehat{\Sigma}$
  \begin{algorithmic}[1]
  \State Consider a $\eta_k$-qubit quantum register
    \State Consider a parametrized quantum circuit $U(\boldsymbol{\theta}),$ $\theta_j\in \R$ with the initial state $\ket{0}^{\otimes \eta_2}$ and set $\ket{\psi(\boldsymbol{\theta})}=U(\boldsymbol{\theta})\ket{0}^{\otimes \eta_2}$
    \State Choose ${n\choose 2}$ Pauli-strings $\Pi_{ij}\in \Pi^{(2)}$, $j< i$ and with a initial choice of the circuit parameters, estimate $x_{ij}=\braket{\psi(\boldsymbol{\theta})|\Pi_{ij}|\psi(\boldsymbol{\theta})}$ and compute first approximation $\widehat{\Sigma}=LL^\top$ (see Eqn. (\ref{eqn:L}))
    \State Update the circuit parameters using a classical optimizer to minimize $\ell(\boldsymbol{\theta})$ as defined in Eqn. (\ref{eqn:defloss1}) 
    \item Finally, determine $L_o$ corresponding to the optimal choice of the circuit parameters and return $\widehat{\Sigma}=L_oL_o^\top.$
  \end{algorithmic}
  \label{Alg:CEstimator}
\end{algorithm}

Recall that the inverse of the covariance matrix is known as the concentration or precision matrix and has a wide range of applications. Now we would like to explore to find an estimate for the inverse observing that the proposed estimator $\widehat{\Sigma}$ is obtained by estimating a Cholesky factor $L$ whose entries are estimated by the expected values of the observables in cPCE. This Cholesky factor can be further employed to obtain an estimate for the inverse of the covariance matrix efficiently using a classical algorithm. Indeed, note that if $\widehat{\Sigma}=LL^{\top}$ then $\widehat{\Sigma}^{-1}=(L^{-1})^\top L^{-1}.$ Since $L$ is lower-triangular, $L^{-1}$ can be obtained by solving two systems of equations $Ly=e_i$ and $L^\top x=y$ such that $x$ is the $i$-th column of $\widehat{\Sigma}^{-1}.$ The systems of equations can be solved using a substitution method, whose complexity is $\mathcal{O}(n^2).$ 
 
\subsubsection{Low-rank C-estimator}

In this section, we further propose a similar method for low-rank estimation of a given classical covariance estimator using cPCE. Indeed, note that rank of a psd matrix $A$ equals the rank of a Cholesky factor $L$ of the matrix, and the number of non-zero diagonal entries of $L$ is the rank of $L.$ Then we have the following proposal.

Let $\widehat{\Sigma}^{ce}$ be an estimated covariance matrix corresponding to a given model and we would like to find a rank-$r$ approximation of $\widehat{\Sigma}^{ce}.$ In that case we consider $L$ as described by equation (\ref{eqn:L}) and set $\lam_i=1$ for $i=1,\hdots,r$ and $\lam_i=0$ for $r<i\leq n.$  We denote it by $L_r$ and $L_rL_r^\top$ is a rank-$r$ approximation of $\widehat{\Sigma}^{ce}.$

The objective is to minimize the diagonal entries $\lam_i,$ $i=n-r+1,\hdots, n-1,n$ of the matrix $L.$ Thus minimizing loss function is formulated as \begin{eqnarray}
    && \min_{\boldsymbol{\theta}} \,\, \ell(\boldsymbol{\theta}) \nonumber\\
    &\mbox{s.t.}&  \mbox{Var}(X_i) = \sum_{l=1}^{i-1} c_{li}^2x_{li}^2, \,\, \mbox{for} \,\, i=n-r+1, \hdots, n-1,n \nonumber\\
    && \mbox{Var}(X_i) < \sum_{l=1}^{i-1} c_{li}^2x_{li}^2, \,\, \mbox{for} \,\, i=2, \hdots, n-r. \label{eqn:lossfs2}
\end{eqnarray}

\subsubsection{C-Estimator completion of partially specified covariance matrix} \label{Sec:partialCE}

Now we discuss the case when a covariance matrix in not known fully. Indeed, a partially specified covariance matrix is a matrix where only a subset of the variances and covariances (entries) are known. Then a challenging problem is requiring completion of the remaining entries to ensure the matrix is positive-definite and a meaningful covariance matrix to the data. Techniques to complete these matrices include the Expectation-Maximization (EM) algorithm and maximum-likelihood estimation (MLE) for structure-constrained data. In what follows, we discuss a cPCE based approach adapting the above method for completion of a partially specified covariance matrix.

Suppose that there are only $k$ entries of a classical estimator $\widehat{\Sigma}^{ce}$ are known, along with all the marginal variances. Suppose the known entries are indexed by $\widehat{\Sigma}^{ce}_{ij},$ $j<i$ where $$(i,j)\in S_l:=\{(i_1,j_1), (i_2,j_2), \hdots, (i_l,j_l)\}\subset [n]\times [n].$$ Then we formulate the loss function as \begin{equation}\label{eqn:lossf2}
    \ell_c(\boldsymbol{\theta}) = \sum_{(i,j)\in S_l}\left(\widehat{\Sigma}^{ij}-\widehat{\Sigma}^{ce}_{ij}\right)^2.
\end{equation}

\subsubsection{E-Estimator}

In this section, we propose a computationally efficient covariance estimator, which we call \textit{E-Estimator} using the cPCE technique. The key technique that we adapt in the model is estimating the covariance for the associated pair of random variables $(X_i, X_j),$ $1\leq i\neq j\leq n, $ by the scaled expectation values directly on a quantum register on $n$-qubits. Thus we designate: \begin{equation}
    \widehat{\Sigma}_{ij} = c_{ij}\bra{\psi(\boldsymbol{\theta})}\Pi_{ij}\ket{\psi(\boldsymbol{\theta})},
\end{equation} where $\Pi_{ij}$ is the Pauli $n$-string formed by Pauli $X$ matrix and identity matrix of order $2$ with the Pauli $X$ is placed at the $i$th and the $j$-th position of the string when $i\neq j$ and $\Pi_{ij}=I_{2^n}$ when $i=j$, $\ket{\psi(\boldsymbol{\theta})}=\mathsf{U}(\boldsymbol{\theta})\ket{0}^{\otimes n}$ for some parametrized unitary matrix $\mathsf{U}(\boldsymbol{\theta})\in\C^{2^n\times 2^n}$ defined by the set of parameters $\boldsymbol{\theta}.$  The parameters in $\boldsymbol{\theta}$ are learned and updated by minimizing the loss function $\mathcal{L}(\boldsymbol{\theta})$ given by equation (\ref{eqn:lossf}). 

Note that we compromise with the number of qubits  compared to finding a C-Estimator since $n>\eta_k.$ However, note that the observables are commuting, and hence the advantage is that they can be measured simultaneously since they share a common eigenbasis and allow efficient estimation of expectation values from the same measurement data. Further, note that the observables $\Pi_{ij}$ have the same set of eigenvalues $1$ and $-1,$ each having multiplicity $2^{n-1}.$ Consequently, we have a unitary matrix $E$ formed by the eigenvectors of $\Pi_{ij}$ such that $\Pi_{ij}E=E\Lambda_{ij},$ where $\Lambda_{ij}=P_{ij}\Lambda$ with $\Lambda$ is the diagonal matrix with entries $1$, $-1$ each is repeated $2^{n-1}$ times and $P_{ij}$ is a permutation matrix.

Indeed, recall that the orthonormal eigenpairs of the Pauli $X$ matrix are $(1,\ket{+})$ and $(-1,\ket{-1}),$ where $\ket{\pm}=(\ket{0}\pm \ket{1})/\sqrt{2}.$ Then it is easy to verify that the common eigenbasis of $\Pi_{ij}$ is given by $\{\ket{+}, \ket{-}\}^{\otimes n},$ however the eigenvalues need not be the same corresponding to a given eigenvector for different Pauli-strings. 

Thus the proposed E-Estimator is described in Equation~\ref{eq:Eestimator}. Consequently, $\widehat{\Sigma}$ is a positive semidefinite matrix if and only if $\widehat{\Lambda}$ is semidefinite, where $\Lambda_{ij}$ is a diagonal matrix with diagonal entries $\pm 1$ for all $i,j.$ Now we have the following theorem. 
\begin{figure*}[!t]
\centering
\begin{equation}
\label{eq:Eestimator}
\resizebox{0.94\textwidth}{!}{$
\widehat{\Sigma} =
\bmatrix{
\mathrm{Var}(X_1)I_{2^n} & \cdots & c_{1n}\braket{\psi(\boldsymbol{\theta})|\Pi_{1n}|\psi(\boldsymbol{\theta})} \\
c_{12}\braket{\psi(\boldsymbol{\theta})|\Pi_{12}|\psi(\boldsymbol{\theta})} & \cdots & c_{2n}\braket{\psi(\boldsymbol{\theta})|\Pi_{2n}|\psi(\boldsymbol{\theta})} \\
\vdots & \ddots & \vdots \\
c_{1n}\braket{\psi(\boldsymbol{\theta})|\Pi_{1n}|\psi(\boldsymbol{\theta})} & \cdots & \mathrm{Var}(X_n)I_{2^n}
}
=
\left(I_n\otimes\bra{\psi(\boldsymbol{\theta})}E\right)
\underbrace{
\bmatrix{
\mathrm{Var}(X_1)I_{2^n} & \cdots & c_{1n}\Lambda_{1n} \\
c_{12}\Lambda_{12} & \cdots & c_{2n}\Lambda_{2n} \\
\vdots & \ddots & \vdots \\
c_{1n}\Lambda_{1n} & \cdots & \mathrm{Var}(X_n)I_{2^n}
}
}_{:=\widehat{\Lambda}}
\left(I_n\otimes E^\dagger\ket{\psi(\boldsymbol{\theta})}\right)
$}.
\end{equation}
\end{figure*}

\begin{theorem}
The regularization parameters $c_{ij}$ can be chosen such that the E-Estimator $\widehat{\Sigma}$ is positive semidefine. In particular, choosing $c_{ij}$ such that \begin{equation}\label{eqn:cpsd}\sum_{\substack{l> k}} c_{kl} + \sum_{\substack{l<k}} c_{lk}\leq \mbox{Var} \, (X_k)\end{equation} for $1\leq k\leq n$ yields a positive semidefinite E-Estimator.    
\end{theorem}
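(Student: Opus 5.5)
The plan is to reduce positive semidefiniteness of $\widehat{\Sigma}$ to diagonal dominance. The $n\times n$ E-Estimator has entries $\widehat{\Sigma}_{kk}=\mbox{Var}(X_k)$ and $\widehat{\Sigma}_{kl}=c_{kl}x_{kl}$ for $k<l$, extended symmetrically, where $x_{kl}=\braket{\psi(\boldsymbol{\theta})|\Pi_{kl}|\psi(\boldsymbol{\theta})}$.

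\textbf{Step 1: bound the off-diagonal entries.} Each $\Pi_{kl}$ is a Pauli string with eigenvalues $\pm 1$. Hence $|x_{kl}|\leq 1$ for every normalized state $\ket{\psi(\boldsymbol{\theta})}$, and this holds uniformly in $\boldsymbol{\theta}$. Taking the regularization parameters nonnegative, $c_{kl}\geq 0$ (which is implicit in condition (\ref{eqn:cpsd})), we get $|\widehat{\Sigma}_{kl}|\leq c_{kl}$.

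\textbf{Step 2: the Gershgorin estimate.} For row $k$, the off-diagonal entries are $c_{kl}x_{kl}$ with $l>k$ and, by symmetry, $c_{lk}x_{lk}$ with $l<k$. Their absolute row sum is therefore at most
\[
\sum_{l>k}c_{kl}+\sum_{l<k}c_{lk}.
\]
Under (\ref{eqn:cpsd}) this is at most $\mbox{Var}(X_k)=\widehat{\Sigma}_{kk}$. So $\widehat{\Sigma}$ is a real symmetric, weakly diagonally dominant matrix with nonnegative diagonal. By Gershgorin's circle theorem every eigenvalue $\mu$ satisfies
\[
\mu\geq \widehat{\Sigma}_{kk}-\sum_{l\neq k}|\widehat{\Sigma}_{kl}|\geq 0
\]
for some $k$. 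Symmetry makes all eigenvalues real, so $\widehat{\Sigma}\succeq 0$.

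\textbf{Step 3: existence.} The first sentence of the theorem follows because such $c_{ij}$ always exist. For instance, take $c_{kl}=\min_m \mbox{Var}(X_m)/(n-1)$ for all pairs; trivially $c_{kl}=0$ also works. If the inequalities in (\ref{eqn:cpsd}) are strict and the variances are positive, the same argument gives positive definiteness.

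\textbf{Alternative via the factorization.} One can instead argue through (\ref{eq:Eestimator}): show $\widehat{\Lambda}\succeq 0$ and then use the congruence-type factorization. Since the $\Lambda_{kl}$ are diagonal, $\widehat{\Lambda}$ is permutation-similar to a direct sum of $2^n$ matrices of size $n\times n$. These have the same diagonal $\mbox{Var}(X_k)$ and off-diagonal entries $\pm c_{kl}$. Each summand is diagonally dominant under (\ref{eqn:cpsd}) for exactly the same reason as in Step 2, so $\widehat{\Lambda}\succeq 0$.

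\textbf{Main obstacle.} The delicate point is the factorization in (\ref{eq:Eestimator}) itself. The factors $I_n\otimes \bra{\psi}E$ and $I_n\otimes E^\dagger\ket{\psi}$ are adjoint, so compressing a PSD matrix preserves PSD. One must, however, check the indexing carefully, since the left factor should be the adjoint of the right. The direct Gershgorin route of Steps 1--2 avoids this bookkeeping entirely, so I would make it the main proof.
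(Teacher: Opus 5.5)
Your proof is correct, but your main argument takes a different route from the paper. The paper argues only at the level of the lifted matrix $\widehat{\Lambda}\in\mathbb{R}^{n2^n\times n2^n}$ of (\ref{eq:Eestimator}). It obtains the existence claim by appealing to nonnegativity of the principal minors of $\widehat{\Lambda}$. It obtains the sufficient condition by noting that (\ref{eqn:cpsd}) makes $\widehat{\Lambda}$ diagonally dominant, and then passes to $\widehat{\Sigma}$ through the factorization. That is essentially your ``alternative'' paragraph. Your remark that only the compression direction $\widehat{\Lambda}\succeq 0\Rightarrow\widehat{\Sigma}\succeq 0$ is available is more accurate than the paper's ``if and only if''.

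Your primary route applies Gershgorin directly to the $n\times n$ matrix using only $|x_{kl}|\le 1$, and it replaces the minors criterion with an explicit admissible choice of $c_{kl}$. This buys elementarity and generality: it never uses that the $\Pi_{kl}$ commute, so it works verbatim for any observables of operator norm at most one.

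The factorization route, pushed one step further, buys a sharper condition instead. Write $|\phi_s\rangle$, $s\in\{0,1\}^n$, for the product state with $|+\rangle$ or $|-\rangle$ in position $k$ according as $s_k=0$ or $1$. Then $x_{kl}=\sum_s p_s(-1)^{s_k+s_l}$ with $p_s=|\langle\phi_s|\psi(\boldsymbol{\theta})\rangle|^2$. Hence $\widehat{\Sigma}=\sum_s p_s D_sCD_s$, where $D_s=\mathrm{diag}((-1)^{s_1},\ldots,(-1)^{s_n})$ and $C$ is the symmetric matrix with diagonal $\mathrm{Var}(X_k)$ and off-diagonal entries $c_{kl}$. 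So $C\succeq 0$ alone already guarantees $\widehat{\Sigma}\succeq 0$ for every $\boldsymbol{\theta}$, with no sign assumption on the $c_{kl}$. This condition is strictly weaker than (\ref{eqn:cpsd}) for $n\ge 3$: take all $c_{kl}$ and all variances equal to $1$. An entrywise bound such as $|x_{kl}|\le 1$ cannot detect this, since it forgets the joint structure of the $x_{kl}$.

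One small correction: $c_{kl}\ge 0$ is not implied by (\ref{eqn:cpsd}). For $n=2$, a sufficiently negative $c_{12}$ satisfies (\ref{eqn:cpsd}) while making $\widehat{\Sigma}$ indefinite whenever $x_{12}\neq 0$. So nonnegativity (or reading (\ref{eqn:cpsd}) with $|c_{kl}|$) is a genuine extra hypothesis. The paper's diagonal-dominance argument needs it just as much as yours.
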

\begin{proof}
Obviously, $\widehat{\Lambda}\in\R^{n2^{n}\times n2^{n}}$ is a sparse Hermitian matrix with sparsity $n$ i.e. exactly $n$ entries are non-zero for each row and column. Then $\widehat{\Lambda}$ is positive semidefinite if and only if all of its principal minors are non-negative. Thus sampling $c_{ij}$ ensuring the principal minors are non-negative proves the first part of the statement. The proof of the second part follows from the fact that $c_{ij}$ satisfying Eqn. (\ref{eqn:cpsd}) makes $\widehat{\Lambda}$ a diagonally dominant matrix. This completes the proof.       
\end{proof}

In the cPCE framework, training and updating the circuit parameters to estimate the E-Estimator $\widehat{\Sigma}$ is based minimizing the loss function  \begin{equation}\label{eqn:defloss2e}
    \mathcal{L}(\boldsymbol{\theta}) = \sum_{\substack{i<j\\i,j=0}}^{n-1} \left(c_{ij}\braket{\psi(\boldsymbol{\theta})|\Pi_{ij}|\psi(\boldsymbol{\theta})}  - \widehat{\Sigma}_{ij}^{ce}\right)^2.
\end{equation} As usual, $\widehat{\Sigma}^{ce}$ is a classical estimator obtained from real data of the random variables $X_1,\hdots, X_n.$ For instance, the linear shrinkage operator is given by $\widehat{\Sigma} := \alpha \Sigma_{SC} + (1-\alpha) I_n$ for some $\alpha\in (0, \, 1)$  and $\Sigma_{SC}$ is the sample covariance matrix. The Algorithm \ref{Alg:EEstimator} describes the QML based procedure for estimating an E-Estimator.

\begin{algorithm}
  \caption{Quantum algorithm for an E-Estimator }
   {\bf Input:} $\widehat{\Sigma}^{ce},$ regularization parameters $c_{ij}$ \\
   {\bf Output:} $\widehat{\Sigma}$
  \begin{algorithmic}[1]
  \State Consider an $n$-qubit quantum register
    \State Consider a parametrized quantum circuit $U(\boldsymbol{\theta}),$ $\theta_j\in \R$ with the initial state $\ket{0}^{\otimes n}$ and set $\ket{\psi(\boldsymbol{\theta})}=U(\boldsymbol{\theta})\ket{0}^{\otimes n}$
    \State Choose ${n\choose 2}$ Pauli-strings $\Pi_{ij}=I_2^{\otimes (i-1)}\otimes X\otimes I_2^{\otimes j-i-1}\otimes X\otimes I_2^{n-j}$, $i< j$ and with a initial choice of the circuit parameters, estimate $\braket{\Pi_{ij}}=\braket{\psi(\boldsymbol{\theta})|\Pi_{ij}|\psi(\boldsymbol{\theta})}$ 
    \State Update the circuit parameters using a classical optimizer to minimize $\mathcal{L}(\boldsymbol{\theta})$ as defined in Eqn. (\ref{eqn:defloss2e}) 
    \item Finally, determine $\boldsymbol{\theta}_o$ the optimal values of the circuit parameters and return $\widehat{\Sigma}.$
  \end{algorithmic}
  \label{Alg:EEstimator}
\end{algorithm}

\subsubsection{E-Estimator completion of partially specified covariance matrix}

Finding an E-Estimator for completing a partially specified covariance matrix follows a similar argument as in Section \ref{Sec:partialCE}.

\subsubsection{Analysis of the Algorithm \ref{Alg:EEstimator}}

Let us consider the classical covariance estimator $\Sigma^s=[\Sigma^s_{ij}].$ Considering the observable as permutations of $\Pi_{ij}^{(2)}=X^{\otimes 2}\otimes I^{\otimes (n-2)}$, the loss function $\mathcal{L}$ is given by \begin{equation}\label{eqn:lossf}
    \mathcal{L}({\boldsymbol{\theta}}) = \sum_{\substack{i<j\\ i,j=0}}^{n-1} \left(c_{ij}\tr[\rho(\boldsymbol{\theta})\Pi_{ij}] - \Sigma^s_{ij}\right)^2
\end{equation} which follows from equation (\ref{eqn:defloss2e}), where 
$\rho(\boldsymbol{\theta})=\mathsf{U}(\boldsymbol{\theta})\rho_0\mathsf{U}(\boldsymbol{\theta})^\dagger.$    
Then,
\begin{align}
\frac{\partial \mathcal{L}(\boldsymbol{\theta})}{\partial\theta}
&= 2\sum_{\substack{i<j\\ i,j=0}}^{n-1}
\left(c_{ij}\,\mathrm{tr}[c_{ij}\rho(\boldsymbol{\theta})\Pi_{ij}]
      - \Sigma^s_{ij}(\alpha)\right)
\frac{\partial}{\partial\theta}\mathrm{tr}[\rho(\boldsymbol{\theta})\Pi_{ij}]\nonumber
\\[6pt]
&= 2\sum_{\substack{i<j\\ i,j=0}}^{n-1}
\left(c_{ij}\,\mathrm{tr}[\rho(\boldsymbol{\theta})\Pi_{ij}]
      - \Sigma^s_{ij}(\alpha)\right)
\mathrm{tr}\!\left[\frac{\partial\rho(\boldsymbol{\theta})}{\partial\theta}\Pi_{ij}\right], \nonumber
\end{align}

Now from equation (\ref{eqn:uCircuit}), setting
\[
\mathsf{U}(\boldsymbol{\theta})
= \prod_{l=1}^L \prod_{k=1}^K e^{-iH_{lk}\theta_{lk}}
= \mathsf{U}_{<(l,k)} e^{-i\theta_{lk}H_{lk}} \mathsf{U}_{>(l,k)},
\]
we have
\begin{align}
\frac{\partial\mathsf{U}(\boldsymbol{\theta})}{\partial\theta_{lk}} \nonumber
&= -i\,\mathsf{U}_{<(l,k)} H_{lk}\nonumber
      e^{-i\theta_{lk}H_{lk}} \mathsf{U}_{>(l,k)} \\
&= -i\,\mathsf{U}_{<(l,k)} H_{lk} \mathsf{U}_{<(l,k)}^\dagger\nonumber
      \mathsf{U}(\boldsymbol{\theta}) = -i\,\widetilde{H}_{lk}(\boldsymbol{\theta})\nonumber
      \mathsf{U}(\boldsymbol{\theta}),
\end{align}
where
\[
\widetilde{H}_{lk}(\boldsymbol{\theta})
= \mathsf{U}_{<(l,k)} H_{lk} \mathsf{U}_{<(l,k)}^\dagger.
\]

Finally,
\begin{align}
\frac{\partial\rho(\boldsymbol{\theta})}{\partial\theta_{lk}}
&= \frac{\partial\mathsf{U}(\boldsymbol{\theta})}{\partial\theta_{lk}}\nonumber
   \rho_0 \mathsf{U}(\boldsymbol{\theta})^\dagger
   + \mathsf{U}(\boldsymbol{\theta})\rho_0
     \frac{\partial\mathsf{U}(\boldsymbol{\theta})^\dagger}{\partial\theta_{lk}} \\
&= -i\,[\,\widetilde{H}_{lk}(\boldsymbol{\theta}),\nonumber
         \rho(\boldsymbol{\theta})\,].
\end{align}

Finally,
\begin{align}
\frac{\partial \mathcal{L}(\boldsymbol{\theta})}{\partial\theta_{lk}}\nonumber
&= -2i
   \sum_{\substack{i<j \\ i,j=0}}^{N-1}
   \left(
     \mathrm{tr}[\rho(\boldsymbol{\theta})\Pi_{ij}]
     - \Sigma^s_{ij}(\alpha)
   \right)
   \mathrm{tr}\!\left(
       \rho(\boldsymbol{\theta})
       [\Pi_{ij}, \widetilde{H}_{lk}(\boldsymbol{\theta})]
   \right).
\end{align}

Now, from \cite{brandao2016local}, we know that local random quantum circuits with nearest-neighbor $2$-qubit gates and random $1$-qubit gates on $n$ qubits form an approximate unitary 
$t$-design for polynomial depth in $n$, with $t$ polynomially bounded. In our consideration of the HEA, which is formed by random $R_y$ gate on each qubit and $CZ$ gates between neighbors represents a local random circuit when repeated $L = poly(n)$ layers. Consequently, the first and second moments of any polynomial of degree $2$ in $\mathsf{U}$ and $\mathsf{U}^\dagger$ 
 corresponding to the HEA are (approximately) equal to those under the Haar measure. In \cite{mcclean2018barren}, the authors investigate the variance of the cost function defined by the expectation value over some Hermitian operator of the output of a random PQC especially for $2$-design and they show that the gradient of the variance scales exponentially vanishing in the number of qubits observing barren plateau phenomena i.e. Var$(\partial_\theta \mathcal{L})=\mathcal{O}(1/2^n)$. However, Cerezo et al. generalize this result in \cite{cerezo2021cost}  and investigate the barren plateau depending on the cepth of the circuit. Indeed, they show that the gradient of expectations of large observables (global cost functions) decays exponentially, whereas it decays at worst polynomially for local observables.  Now we recall the following results from \cite{cerezo2021cost} and \cite{mcclean2018barren} for any Haar random $\mathsf{U}\in \mathfrak{su}(d)$, a pure state $\rho$ and some operators $A,B$ that
 \begin{enumerate}
     \item First moment: $\mathbb{E}_U\left[\tr(\mathsf{U}\rho \mathsf{U}^\dagger A)\right]=\dfrac{\tr(A)}{d}$
     \item Second moment: $\mathbb{E}_U\left[\tr(\mathsf{U}\rho \mathsf{U}^\dagger A)\tr(\mathsf{U}\rho \mathsf{U}^\dagger B)\right]=\dfrac{\tr(AB)+\tr(A)\tr(B)}{d^2-1}$
     \item Scaling: $Var\left(\tr(\rho(\boldsymbol{\theta}))A\right)=\mathcal{O}\left(\dfrac{\|A\|^2_{\mathrm{HS}}}{d^2}\right)=\mathcal{O}\left(\dfrac{\|A\|^2_{\mathrm{HS}}}{4^n}\right),$ where $\|X\|_{\mathrm{HS}}=\sqrt{\tr(X^\dagger X)}.$  For bounded Hilbert-Schmidt norm $\|A\|^2_{\mathrm{HS}}=\mathcal{O}(poly(n))$  gives $Var\left(\tr(\rho(\boldsymbol{\theta}))A\right)=\mathcal{O}\left(\dfrac{poly(n)}{4^n}\right),$ for global cost functions. 
 \end{enumerate}

For brevity we consider the notation $r\in\{1,2,\hdots, n(n-1)/2\}$ corresponding to the pair $(i,j)$ in our consideration of the measurement operators $\Pi_{ij}^{(2)}$ as $\Pi_r$ and $\Sigma_{ij}^s(\alpha)=\Sigma_r^s(\alpha).$ Note that for our considered $n$-qubit HEA with $L$ layers on a ring, we have 
\begin{align}
\label{eqn:HEA}
\mathsf{U}_l(\boldsymbol{\theta}_l)
&=\nonumber
\left(\prod_{(i,i+1)\bmod n} CZ_{i,i+1}\right)
\left(\prod_{j=1}^n e^{-i\theta_{l,j} Y_j/2}\right),
\\[4pt]
\mathsf{U}(\boldsymbol{\theta})
&=
\mathsf{U}_L \cdots \mathsf{U}_1 .
\end{align}
Besides, $\Pi_r$ and each generator $Y_j/2$ has operator norm $\mathcal{O}(1).$ Moreover, note that a loss function is local if the observable involves only $\mathcal{O}(1)$ qubits such as $\mathcal{L}(\boldsymbol{\theta})=\tr(\rho(\boldsymbol{\theta})O)$ with $supp(O)$ is constant size. However, our loss function depends on all pair of qubits and hence it correlates across the entire system and hence it is a global cost function according to \cite{cerezo2021cost}. Thus we have the following theorem. 


\begin{theorem}\label{thm:Vlossf}
Consider the HEA as described in equation (\ref{eqn:HEA}) with $L=poly(n)$ layers which forms an approximate $2$-design. Then for the loss function $\mathcal{L}$ given by equation (\ref{eqn:lossf}), $$Var_\theta\left(\dfrac{\partial\mathcal{L}}{\partial\theta_{l,j}}\right)=\mathcal{O}\left(\dfrac{\|c\circ c\|_2^2+\|c\|_2^2\|\Sigma^s\|_F^2}{2^n}\right),$$ where $c=(c_1,\hdots.c_{n(n-1)/2})$ and $\Sigma^s=(\Sigma_1^s,\hdots,\Sigma^s_{n(n-1)/2}).$ In particular, when  the coefficients $c_r$ and $\Sigma^s_r$ are uniformly bonded i.e. $\max_r c_e$ and $\max_r\Sigma^s_r$ are $\mathcal{O}(1)$ then  $Var_\theta\left(\dfrac{\partial\mathcal{L}}{\partial\theta_{l,j}}\right)=\mathcal{O}\left(\dfrac{n^4}{2^n}\right).$ The notation $\circ$ denotes the Hadamard product of two vectors.

\end{theorem}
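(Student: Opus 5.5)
Starting from the gradient formula just derived, we write
\[
\partial_{l,j}\mathcal{L} = -2i\sum_{r} \big(c_r^2\, a_r - c_r\Sigma^s_r\big)\, b_r,
\]
where $a_r=\tr[\rho(\boldsymbol{\theta})\Pi_r]$ and $b_r=\tr\big(\rho(\boldsymbol{\theta})[\Pi_r,\widetilde{H}_{lj}]\big)$. This form keeps the factor $c_r$ coming from the chain rule.

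We split the gradient into a quadratic part $G_1=\sum_r c_r^2 a_r b_r$ and a linear part $G_2=\sum_r c_r\Sigma^s_r b_r$. Using $\mathrm{Var}(G_1-G_2)\le 2\mathrm{Var}(G_1)+2\mathrm{Var}(G_2)$, it suffices to show
\[
\mathrm{Var}(G_1)=\mathcal{O}\big(\|c\circ c\|_2^2/2^n\big), \qquad \mathrm{Var}(G_2)=\mathcal{O}\big(\|c\|_2^2\|\Sigma^s\|_F^2/2^n\big).
\]

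For $G_2$ we use the approximate $2$-design property from \cite{brandao2016local}. We split the circuit at the gate $(l,j)$ into $\mathsf{U}_{<}$ and $\mathsf{U}_{>}$, as in the derivation of $\widetilde{H}_{lk}$, and treat one of the two halves as (approximately) Haar. The first-moment formula gives $\mathbb{E}[b_r]=\tr([\Pi_r,\widetilde H])/d=0$, so $\mathrm{Var}(G_2)=\mathbb{E}[G_2^2]$. By Cauchy--Schwarz,
\[
G_2^2\le \Big(\sum_r c_r^2\Big)\Big(\sum_r (\Sigma^s_r)^2\, b_r^2\Big).
\]
The second-moment formula, applied with $A=B=[\Pi_r,\widetilde H]$, bounds each $\mathbb{E}[b_r^2]$ by $\|[\Pi_r,\widetilde H]\|_{\mathrm{HS}}^2/(d^2-1)$. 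Since $\|\Pi_r\|=1$ and $\|Y_j/2\|=1/2$, we have $\|[\Pi_r,\widetilde H]\|_{\mathrm{HS}}^2\le d\,\|[\Pi_r,\widetilde H]\|^2=\mathcal{O}(d)$. Hence $\mathbb{E}[b_r^2]=\mathcal{O}(1/d)=\mathcal{O}(2^{-n})$, and this yields the second term.

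For $G_1$ we first bound each product deterministically by $|a_r|\le 1$, so $|a_rb_r|\le|b_r|$. Cauchy--Schwarz then gives $G_1^2\le \|c\circ c\|_2^2\sum_r b_r^2$. The obvious estimate of $\sum_r \mathbb{E}[b_r^2]$ is $n^2\cdot\mathcal{O}(2^{-n})$, which would leave an unwanted factor $\binom n2$.

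This is the main obstacle. To remove the factor, I would use the fact that the $\Pi_r$ are mutually orthogonal Pauli strings. Then $\sum_r b_r^2$ is the squared norm of the projection of $\rho'=-i[\widetilde H,\rho]$ onto the span of $\{\Pi_r/\sqrt d\}$, scaled by $d$. Its expectation is controlled by $\tr(\rho'^2)\cdot(\#\text{strings})/d^2$ averaged over the design, and one must check whether this averaging removes the $n^2$ factor. If it does not, the fallback is to absorb the polynomial factor into the $\mathcal{O}$ under the paper's convention, as in item 3 (``$\mathcal{O}(poly(n)/4^n)$''). In either case the main bound follows up to polynomial factors in $n$.

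The specialization is then immediate. If $\max_r|c_r|$ and $\max_r|\Sigma^s_r|$ are $\mathcal{O}(1)$, then $\|c\circ c\|_2^2,\ \|c\|_2^2,\ \|\Sigma^s\|_F^2$ are all $\mathcal{O}(n^2)$. The dominant term is $\|c\|_2^2\|\Sigma^s\|_F^2=\mathcal{O}(n^4)$, which gives $\mathcal{O}(n^4/2^n)$.
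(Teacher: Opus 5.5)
Your split into $G_1$ and $G_2$ is the paper's decomposition; it too bounds the $c_r^2X_rY_{r,l,j}$ and $c_r\Sigma^s_rY_{r,l,j}$ pieces separately. Your treatment of $G_2$ is sound and cleaner than the paper's. Splitting at gate $(l,j)$ makes the averaged operator independent of the Haar-random half, and using $\|A\|_{\mathrm{HS}}^2\le d\,\|A\|_{\mathrm{op}}^2$ avoids the paper's incorrect claim that $[\Pi_r,\widetilde{H}_{l,j}]$ acts on at most four qubits. This does require one half of the circuit to be an approximate design, a mild strengthening of the hypothesis.

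The genuine gap is in $G_1$, and the orthogonality remedy cannot close it. Average over the half $\mathsf{U}_{<}$ that conjugates $\Pi_r$, and write $b_r=\mathrm{Tr}(\mathsf{U}_{<}^\dagger\Pi_r\mathsf{U}_{<}M)$ with $M=[H_{lj},\sigma]$, where $\sigma$ is the state entering that half. The Haar second-moment formula gives
\[
\mathbb{E}\,|b_r|^2=\frac{\mathrm{Tr}(\Pi_r^2)\,\mathrm{Tr}(MM^\dagger)}{d^2-1}=\frac{d\,\mathrm{Tr}(MM^\dagger)}{d^2-1},
\]
which is the same for every $r$ and of order $2^{-n}$. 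Orthogonality of the $\Pi_r$ only controls cross terms, and after your Cauchy--Schwarz step there are none. So $\mathbb{E}\sum_r|b_r|^2$ is exactly $\binom{n}{2}$ times this quantity; your own heuristic $\mathrm{Tr}(\rho'^2)\cdot(\#\text{strings})/d^2$ is already linear in the number of strings. Your route therefore gives $\binom{n}{2}\|c\circ c\|_2^2/2^n$ (or $\|c\|_2^4/2^n$ via the triangle inequality in $L^2$), not the stated first term.

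The loss comes from the deterministic bound $|a_r|\le 1$, which discards the fact that $a_r$ is itself of size $2^{-n/2}$. The missing idea is to use fourth moments. Average instead over the half $\mathsf{U}_{>}$ that prepares $\sigma$, and write $a_r=\mathrm{Tr}(\sigma A_r)$ and $|b_r|=|\mathrm{Tr}(\sigma B_r)|$, with $A_r=\mathsf{U}_{<}^\dagger\Pi_r\mathsf{U}_{<}$ and $B_r=i[A_r,H_{lj}]$.
\begin{itemize}
\item Both operators are traceless with operator norm at most one, and $\mathrm{Tr}(A_rB_r)=0$.
\item In the Haar fourth moment of a pure state (a sum over $S_4$ divided by $d(d+1)(d+2)(d+3)$), only $\mathrm{Tr}(A_r^2)\mathrm{Tr}(B_r^2)\le d^2$ and six $4$-cycle terms of size at most $d$ survive. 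Hence $\mathbb{E}[a_r^2|b_r|^2]=\mathcal{O}(4^{-n})$.
\item The triangle inequality in $L^2$ then gives $\mathrm{Var}(G_1)\le\|c\|_2^4\,\mathcal{O}(4^{-n})\le\binom{n}{2}\|c\circ c\|_2^2\,\mathcal{O}(4^{-n})=\mathcal{O}(\|c\circ c\|_2^2/2^n)$.
\end{itemize}
This needs an approximate $4$-design, the same kind of assumption the paper already makes, now at $t=4$. It does not follow from second moments alone.

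For calibration, the paper's own proof has the same hole. It asserts $\mathbb{E}[X_r^2Y_{r,l,j}^2]=\mathcal{O}(2^{-n})$ ``up to polynomial factors'' and then adds the per-$r$ variances as if the terms were uncorrelated. The inequality $\mathrm{Var}(\sum_r Z_r)\le(\sum_r\sqrt{\mathrm{Var}(Z_r)})^2$ that it quotes would give $\|c\|_2^4$ in place of $\|c\circ c\|_2^2$. So your fallback recovers what the paper's argument actually establishes, including the $\mathcal{O}(n^4/2^n)$ specialization. It does not recover the displayed bound with an $\mathcal{O}$ hiding only constants.
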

\pf Set $X_r(\boldsymbol{\theta})=\tr[\rho(\boldsymbol{\theta})\Pi_r]$ and $Y_{r,l,j}=\tr[\rho(\boldsymbol{\theta})[\Pi_r,\widetilde{H}_{l,j}(\boldsymbol{\theta})]].$ Then $$\dfrac{\partial\mathcal{L}}{\partial\theta_{l,j}}=-2i \sum_{r=1}^{n(n-1)/2} \left(c_rX_r(\boldsymbol{\theta} - \Sigma^s_r)c_r Y_{r,l,j}(\boldsymbol{\theta})\right).$$ Note that $\Pi_r$ and $[\Pi_r,\widetilde{H}_{l,j}]$, we have $\|\cdot\|_{\mathrm{op}}=\mathcal{O}(1)$, $\|\cdot\|^2=\mathcal{O}(2^n)$ with trace zero. This follows from the gact that $[\Pi_r,\widetilde{H}_{l,j}]$ is a sum of a constant number of Pauli strings acting on at most $4$ qubits. As discussed above, from a unitary $2$-design sample $U$ yields $$Var\left(\tr(U\rho U^\dagger A)\right)=\mathcal{O}\left(\frac{1}{2^n}\right).$$ Consequently, $A=\Pi_r$ implies $Var(X_r)=\mathcal{O}(2^{-n}),$ and $A=[\Pi_r,\widetilde{H}_{l,j}]$ implies $Var(Y_{r,l,j})=\mathcal{O}(2^{-n}).$ Similarly, $\mathbb{E}[X_r^2Y_{r,l,j}^2]$ scale as $\mathcal{O}(2^{-n})$ up to polynomial factors.

Now, using 
\[
\mathrm{Var}\!\left(\sum_r Z_r\right)
\le \left(\sum_r \sqrt{\mathrm{Var}(Z_r)}\right)^2
\]
and the Cauchy--Schwarz inequality, it follows that
\begin{align}
\mathrm{Var}\!\left(c_r^2 X_r Y_{r,l,j}\right) 
\le c_r^4\, \mathbb{E}[X_r^2 Y_{r,l,j}^2] = c_r^4\,\mathcal{O}\!\left(\frac{1}{2^n}\right), \nonumber
\end{align}
and
\begin{align}
\mathrm{Var}\!\left(c_r \Sigma_r^s Y_{r,l,j}\right) 
\le c_r^2\, \mathbb{E}[Y_{r,l,j}^2] = c_r^2\,(\Sigma_r^s)^2\,\mathcal{O}\!\left(\frac{1}{2^n}\right). \nonumber
\end{align}

Finally, summing over $r$ gives
\begin{align}
\mathrm{Var}\!\left(
   \frac{\partial\mathcal{L}}{\partial\theta_{l,j}}
\right) 
= \mathcal{O}\!\left(
   \frac{
      \sum_r c_r^4
      +
      \sum_r c_r^2 (\Sigma_r^s)^2
   }{2^n}
   \right) \nonumber
\\
= \mathcal{O}\!\left(
   \frac{
     \|c \circ c\|_2^2
     + \|c\|_2^2 \|\Sigma^s\|_F^2
   }{2^n}
   \right).
\end{align}
This completes the proof. It follows from Theorem \ref{thm:Vlossf} that the barren plateau can be mitigated when some of $c_r$ grow exponentially in $n.$ 

\section{Numerical simulations}
\label{sec_results}
In our experiments, we aim to evaluate the performance of the C-Estimator for estimating covariance matrices under various scenarios. For each experiment, we begin by generating a classical covariance matrix $\Sigma_\mathrm{ce} \in \mathbb{R}^{n \times n}$, ensuring it is positive semidefinite (PSD) by sampling a random matrix $A \in \mathbb{R}^{n \times n}$ with entries drawn from a standard normal distribution and computing $\Sigma_\mathrm{ce} = A A^\top$. In low-rank recovery experiments, we construct $\Sigma_\mathrm{ce}$ using a tall matrix $B \in \mathbb{R}^{n \times r}$ with $r < n$ to simulate a true rank-$r$ covariance via $\Sigma_\mathrm{ce} = B B^\top$. For covariance completion tests, we generate a random PSD matrix as above and randomly mask a fraction of its entries to simulate missing data. Each covariance matrix is then passed to the quantum covariance estimation procedure, which optimizes a parameterized quantum circuit to produce estimated matrix elements. For convergence experiments, we record the mean absolute error (MAE) between the estimated and original covariance matrices across iterations, allowing us to visualize both the average behavior and variability of the estimator over multiple independent runs.

\subsection{Convergence analysis}

\begin{figure}
    \centering
    \includegraphics[width=\linewidth]{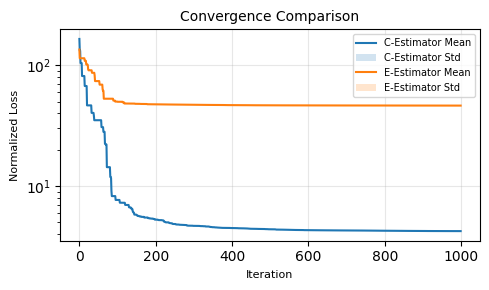}
    \caption{Convergence of the C-Estimator and E-Estimators during optimization, respectively. The solid line shows the mean of the best-so-far mean absolute error (MAE) across multiple runs. This illustrates the average convergence speed over iterations.}
    \label{fig:convergence_plot}
\end{figure}

To assess the convergence behavior of the estimators, we track the evolution of the mean absolute error (MAE) between the estimated covariance matrix and the original matrix over the optimization iterations. For each run, the MAE at iteration $t$ is recorded, and the minimum error achieved up to that iteration is computed, producing a "best-so-far" curve. When multiple independent runs are performed, the mean and standard deviation of the best-so-far MAE across runs are calculated. These quantities are visualized using a convergence plot in Figure~\ref{fig:convergence_plot}, where the mean curve is shown. This approach allows us to clearly observe a comparison between both estimators, highlighting its stability and robustness during the optimization procedure.

\subsection{Low-Rank Recovery}

Figure~\ref{fig:low_rank_recovery} evaluates the ability of the C-Estimator to recover covariance matrices that are approximately low-rank. A random covariance matrix $\Sigma \in \mathbb{R}^{n \times n}$ is generated. The C-Estimator is then applied assuming a range of target ranks $r \in [1, n]$. For each assumed rank, the estimator reconstructs the covariance from the available low-rank structure. The final loss, measured as the mean absolute error (MAE) between the reconstructed and true covariance, is recorded across multiple independent runs. This setup mimics scenarios where the underlying quantum correlations are effectively low-dimensional, and tests how sensitive the estimator is to rank assumptions.

Figure~\ref{fig:low_rank_recovery} presents the final loss as a function of the assumed rank. Each point represents the mean MAE over several independent runs, while the connecting line illustrates the trend. As expected, the error decreases as the assumed rank approaches the true rank $r_\text{true} = 2$, after which further increasing the rank yields minimal improvement. This behavior indicates that the estimator can effectively exploit low-rank structure when the rank is correctly identified.

\begin{figure}
    \centering
    \includegraphics[width=\linewidth]{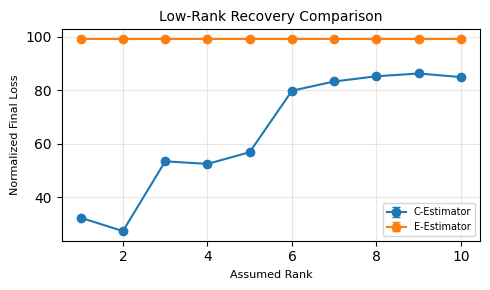}
    \caption{Low-rank covariance recovery. The plot shows the mean final MAE of the C-Estimator as a function of the assumed rank, with the true rank $r_\text{true}$ indicated by the dashed line. The error decreases when the assumed rank approaches the true rank, demonstrating the estimator's sensitivity to rank assumptions and its ability to exploit low-rank structure.}
    \label{fig:low_rank_recovery}
\end{figure}

\subsection{Covariance Completion Experiment}

This experiment tests the C-Estimator and E-Estimator's ability to reconstruct partially observed covariance matrices. A full covariance matrix $\Sigma \in \mathbb{R}^{n \times n}$ is first generated as $\Sigma = A A^\top$, where $A$ is a random matrix. A fraction of the matrix entries is then removed at random to simulate missing data, producing a masked matrix $\Sigma_\text{partial}$ and a corresponding mask $M$ indicating observed entries. The C-Estimator is applied to reconstruct the full covariance from the incomplete data. The final reconstruction error is measured using the mean absolute error (MAE) between the estimated covariance and the true covariance. This experiment evaluates the robustness of the estimator in the presence of missing information and its capability for covariance completion.

Figure~\ref{fig:covariance_completion} displays the final MAE as a function of the fraction of missing entries. Each point corresponds to the mean over multiple independent runs. As the fraction of missing data increases, the reconstruction error rises, reflecting the increasing difficulty of estimating the full covariance. The trend highlights the estimator's performance in handling incomplete datasets and shows how the accuracy degrades gracefully with more missing values.

\begin{figure}
    \centering
    \includegraphics[width=\linewidth]{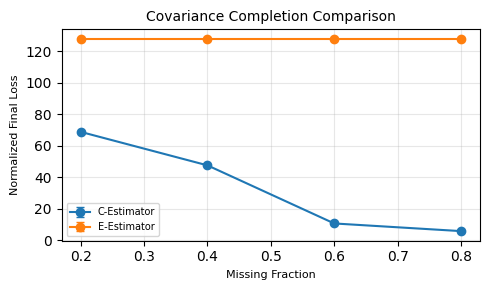}
    \caption{Covariance completion. The plot shows the mean final MAE of the C-Estimator (blue) and E-Estimator (orange) as a function of the fraction of missing entries, respectively. As expected, higher fractions of missing data lead to increased error, demonstrating the impact of incomplete observations on the covariance reconstruction.}
    \label{fig:covariance_completion}
\end{figure}

\subsection{Near-Singular Covariance Experiment} 
This experiment evaluates the stability and performance of the C-Estimator when applied to covariance matrices with very small eigenvalues. A near-singular covariance matrix $\Sigma \in \mathbb{R}^{n \times n}$ is constructed as $\Sigma = U \Lambda U^\top$, where $U$ is a random orthogonal matrix and $\Lambda = \mathrm{diag}(\lambda_1, \ldots, \lambda_n)$ is a diagonal matrix with eigenvalues spanning several orders of magnitude, from $\lambda_\text{max} = 1$ down to $\lambda_\text{min} = 10^{-4}$. This setup mimics scenarios where the covariance matrix is ill-conditioned or nearly rank-deficient, challenging the numerical stability of the estimator.

Figure~\ref{fig:near_singular} shows the convergence plot of the best loss value achieved so far as a function of optimization iterations, on a logarithmic scale. The shaded area represents the variability across multiple runs. The plot allows visualizing how quickly the estimator approaches a stable solution despite the ill-conditioning of the covariance matrix. A slower convergence or plateau at higher loss values indicates the numerical difficulty associated with near-singular structures.

\begin{figure}
    \centering
    \includegraphics[width=\linewidth]{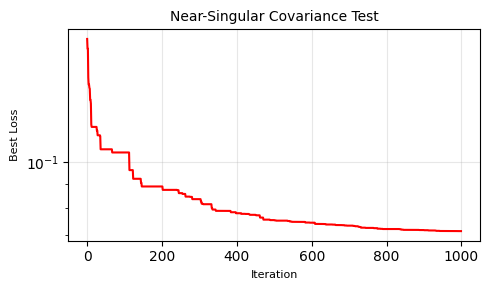}
    \caption{Near-singular covariance test. The figure shows the best-so-far MAE of the C-Estimator during optimization for a near-singular covariance matrix. The logarithmic scale highlights the convergence dynamics. Despite the ill-conditioning, the estimator achieves a stable reconstruction, illustrating its robustness to nearly rank-deficient matrices.}
    \label{fig:near_singular}
\end{figure}

\subsection{Gradient loss function E-Estimator}

We investigate the performance of the E-Estimator as a function of circuit depth by varying the number of layers as $\text{poly}(n)$ where $n$ is the number of qubits. Figure~\ref{fig_loss_gradient} shows for each number of qubits \(n \in \{4,6,8\}\) and number of layers the corresponding loss obtained. For each configuration, we generate a random target covariance matrix \(\Sigma_{\mathrm{CE}}\) and run the E-Estimator multiple times with random initial circuit parameters to account for stochastic variability. Scatter plot shows the mean loss across different runs. The resulting plot displays the loss as a function of the number of layers for each qubit count, allowing us to assess how circuit expressivity and depth impact the estimator's ability to recover the target covariance, and to observe potential saturation or optimization difficulties associated with deep circuits.

\begin{figure}
    \centering
    \includegraphics[width=\linewidth]{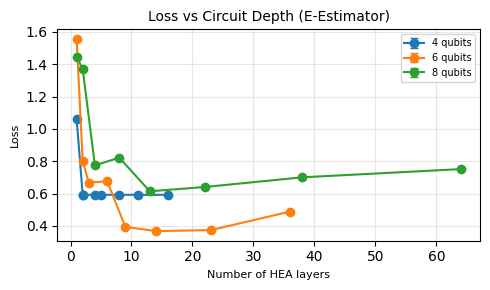}
    \caption{Converged loss as a function of poly number of layers for different number of qubits using cPCE approach with E-Estimator.}
    \label{fig_loss_gradient}
\end{figure}

\section{Conclusion}

Estimation of covariance matrices is a fundamental problem in statistics, machine learning, and financial data analysis, particularly in high-dimensional settings and in the presence of incomplete observations. In this work, we proposed a quantum machine learning framework for estimating classical covariance matrices using parameterized quantum circuits within the Pauli-Correlation-Encoding (PCE) paradigm. Two estimators, namely the C-Estimator and the E-Estimator, were introduced to learn covariance structures from observable expectation values of quantum states. The C-Estimator ensures positive (semi)definiteness through a Cholesky factorization and enables efficient computation of low-rank approximations and precision matrices, whereas the E-Estimator provides a computationally simpler approach for directly estimating covariance entries. We derived sufficient conditions on regularization parameters to guarantee positive (semi)definiteness of the learned covariance matrix and analyzed the trade-offs between the two estimators in terms of qubit requirements and learning complexity. Furthermore, we showed that appropriate choices of regularization parameters can mitigate the barren plateau phenomenon during training. Numerical simulations demonstrate that the proposed framework is robust for learning covariance matrices, handling low-rank structures, and solving covariance completion problems. These results suggest that quantum machine learning approaches may provide a promising direction for addressing high-dimensional statistical estimation problems in applications such as financial data analysis. 

Finally, we note that the proposed continuous-domain PCE framework offers a promising avenue for addressing a broad class of optimization problems, particularly in high-dimensional settings. By leveraging its ability to represent complex structures through quantum circuit encodings, the framework can be utilized to approximate solutions where traditional methods face scalability challenges. In particular, it may enable the encoding of high-dimensional algebraic manifolds via expectation values of suitably chosen observables, thereby providing a representation of the solution space. This perspective may open the door to new algorithmic strategies that combine probabilistic modeling with geometric insights, potentially leading to more efficient and scalable quantum optimization techniques.\\\\

{\bf Acknowledgment.} The authors acknowledge helpful discussions on this topic with colleagues in Fujitsu Research of Europe, Fujitsu Research of America, and Fujitsu Research of Japan. Additionally, all the numerical simulation were run using Fujitsu QARP software.

\bibliographystyle{plain}
\bibliography{reff}
\end{document}